\documentclass[12pt]{article}
\usepackage{graphicx}

\def\hybrid{\topmargin 0pt      \oddsidemargin 0pt
        \headheight 0pt \headsep 0pt
       \voffset-1cm
        \textwidth 6.25in       
       \textheight 9.5in       
        \marginparwidth 0.0in
        \parskip 5pt plus 1pt   \jot = 1.5ex}
\catcode`\@=11
\def\marginnote#1{}

\newcount\hour
\newcount\minute
\newtoks\amorpm
\hour=\time\divide\hour by60
\minute=\time{\multiply\hour by60 \global\advance\minute by-\hour}
\edef\standardtime{{\ifnum\hour<12 \global\amorpm={am}%
        \else\global\amorpm={pm}\advance\hour by-12 \fi
        \ifnum\hour=0 \hour=12 \fi
        \number\hour:\ifnum\minute<10 0\fi\number\minute\the\amorpm}}
\edef\militarytime{\number\hour:\ifnum\minute<10 0\fi\number\minute}

\def\draftlabel#1{{\@bsphack\if@filesw {\let\thepage\relax
   \xdef\@gtempa{\write\@auxout{\string
      \newlabel{#1}{{\@currentlabel}{\thepage}}}}}\@gtempa
   \if@nobreak \ifvmode\nobreak\fi\fi\fi\@esphack}
        \gdef\@eqnlabel{#1}}
\def\@eqnlabel{}
\def\@vacuum{}
\def\draftmarginnote#1{\marginpar{\raggedright\scriptsize\tt#1}}

\def\draftlabel#1{{\@bsphack\if@filesw {\let\thepage\relax
   \xdef\@gtempa{\write\@auxout{\string
      \newlabel{#1}{{\@currentlabel}{\thepage}}}}}\@gtempa
   \if@nobreak \ifvmode\nobreak\fi\fi\fi\@esphack}
        \gdef\@eqnlabel{#1}}
\def\@eqnlabel{}
\def\@vacuum{}
\def\draftmarginnote#1{\marginpar{\raggedright\scriptsize\tt#1}}

\def\draft{\oddsidemargin -.5truein
        \def\@oddfoot{\sl preliminary draft \hfil
        \rm\thepage\hfil\sl\today\quad\militarytime}
        \let\@evenfoot\@oddfoot \overfullrule 3pt
        \let\label=\draftlabel
        \let\marginnote=\draftmarginnote
   \def\@eqnnum{(\theequation)\rlap{\kern\marginparsep\tt\@eqnlabel}%
\global\let\@eqnlabel\@vacuum}  }

\def\numberbysection{\@addtoreset{equation}{section}
        \def\theequation{\thesection.\arabic{equation}}}

\def\underline#1{\relax\ifmmode\@@underline#1\else
        $\@@underline{\hbox{#1}}$\relax\fi}

\def\titlepage{\@restonecolfalse\if@twocolumn\@restonecoltrue\onecolumn
     \else \newpage \fi \thispagestyle{empty}\c@page\z@
        \def\thefootnote{\fnsymbol{footnote}} }

\def\endtitlepage{\if@restonecol\twocolumn \else  \fi
        \def\thefootnote{\arabic{footnote}}
        \setcounter{footnote}{0}}  
\relax

\hybrid

\newfont{\Bbb}{msbm10 scaled 1\@ptsize00}
\newfont{\Bbbb}{msbm7 scaled 1\@ptsize00}

\newcommand{\DDD}{\raise-1pt\hbox{$\mbox{\Bbbb D}$}}

\newcommand{\UUU}{\raise-1pt\hbox{$\mbox{\Bbbb U}$}}

\newcommand{\ZZ}{\mbox{\Bbb Z}}
\newcommand{\z}{\raise-1pt\hbox{$\mbox{\Bbbb Z}$}}

\newcommand{\sss}{\raise-1pt\hbox{$\mbox{\Bbbb S}$}}

\def\beq{\begin{equation}}
\def\eeq{\end{equation}}
\def\p{\partial}

\newtheorem{theorem}{Theorem}

\newtheorem{lemma-definition}{Lemma-Definition}

\newtheorem{remark}{Remark}

\newtheorem{proposition}{Proposition}

\def\square{\hfill
{\vrule height6pt width6pt depth1pt} \break \vspace{.01cm}}

\begin{document}

\begin{titlepage}

\title{Elliptic solutions to matrix CKP equation}

\author{A. Zabrodin\thanks{National Research University 
Higher School of Economics,
20 Myasnitskaya Ulitsa, Moscow 101000, Russia, 
and NRC ``Kurchatov institute'', Moscow, Russia;
e-mail: zabrodin@itep.ru}}

\date{August 2026}

\maketitle

\vspace{-7cm} \centerline{ \hfill ITEP-TH-31/26}\vspace{7cm}

\begin{abstract}

A class of elliptic solutions to the matrix
CKP equation is studied. Equations of motion for
their poles and matrix coefficients at the poles
are obtained. As in the scalar case, they are 
of the first order, in contrast to what takes
place in the KP and BKP hierarchies, where the equations
of motion are of the second order.

\end{abstract}

\end{titlepage}

\vspace{5mm}

%

\tableofcontents

\vspace{5mm}

\section{Introduction}

The study of singular solutions to nonlinear integrable
equations and dynamics of their poles $x_i$ in the
space variable $x$
was initiated in the pioneering papers
\cite{AMM77}--\cite{CC77}. Now it 
is a well known subject in the theory of integrable systems. 
The remarkable result is
that poles of solutions to the Kadomtsev-Petviashvili (KP) equation, 
as functions of a time variable move 
as particles of the integrable Calogero-Moser many-body system
\cite{Calogero71}--\cite{OP81}. Elliptic, 
trigonometric and rational solutions
correspond respectively to the Calogero-Moser systems with 
elliptic, trigonometric and rational potential functions. 
Later the correspondence with many-body integrable systems
was extended to other equations and
their hierarchies (see \cite{Shiota94}--\cite{PZ23}).

As is known, hierarchies of nonlinear integrable equations
of the KP type
admit integrable multi-component and matrix 
generalizations \cite{DJKM81a}--\cite{Teo11}.
Singular solutions to the matrix KP
equation were investigated in \cite{KBBT95,PZ18,PZ21spin}. 
In the matrix case, the data
of singular solutions include not only positions 
of poles $x_i$ but also
some ``internal degrees of freedom'' 
which are matrix coefficients at the poles
(they are fixed in the scalar case). 
In the work \cite{KBBT95} it was shown that 
the dynamics of the data of such solutions
with respect to the time $t_2$ of the matrix
KP hierarchy is isomorphic to the dynamics of a 
spin generalization of the Calogero-Moser 
system which is also known as the Gibbons-Hermsen model \cite{GH84}.

In the paper \cite{DJKM81} an infinite integrable hierarchy of partial differential
equations with ${\rm Sp}\, (\infty)$ symmetry was introduced. 
It is called the
Kadomtsev-Petviashvili hierarchy of type C (CKP).
It was further studied in \cite{DM-H09}--\cite{KZ21}.
In \cite{KL23,Z24} a multi-component generalization of this hierarchy
was suggested.
In particular, its elliptic solutions were investigated in
\cite{KZ21}, where it was shown that poles $x_i$, as functions of time,
satisfy differential equations of {\it first order}, in contrast
to dynamical equations for poles of singular solutions to the
KP hierarchy, which are of second order.

The aim of this paper is to extend this result to the matrix
CKP equation. As in our earlier works on dynamics of poles,
we use the Krichever method \cite{Krichever78,Krichever80}.
In this approach, it is not necessary to know an explicit form
of the nonlinear equations in question 
(which often may be too complicated). Instead, one deals with
auxiliary linear problems for the Baker-Akhiezer function 
whose compatibility leads to the
nonlinear equation, then separation of variables is clear from
the very beginning. In accordance with this approach, we 
consider the first nontrivial linear problem for the matrix CKP
equation, which allows us to derive equations of motion
for poles and the corresponding matrix coefficients 
in $t_3$ without any use of the nonlinear equation itself.

In a nutshell, the main result of this work 
is as follows. The Lax operator
for the matrix CKP hierarchy is the pseudodifferential operator
with matrix coefficients
\beq\label{multi3a}
{\cal L}=\p_x +u_1\p_x^{-1}+u_2 \p_x^{-2}+\ldots \, 
\eeq
satisfying the constraint ${\cal L}^{\dag}=-{\cal L}$ (the conjugation
$(\ldots )^{\dag}$ is defined in the standard way, see Section 2).
For elliptic solutions,
the functions $u_1(x)$, $u_2(x)$, etc are double-periodic in the space
variable $x$.
In particular, matrix elements of the 
function $u_1$ are of the form
\beq\label{el7a}
u_{1, \alpha \beta}=-\sum_{i}a_i^{\alpha}a_i^{\beta}\wp (x-x_i),
\eeq
where $\wp (x)$ is the Weierstrass $\wp$-function and 
$a_i^{\alpha}$ are components of some time-dependent vectors
${\bf a}_i$
($\alpha =1, \ldots , n$).

To find the dependence of $x_i$ and $a_i^{\alpha}$ on $t_3$
one should consider the first nontrivial linear problem for the matrix 
Baker-Akhiezer function $\Psi$:
\beq\label{psi}
\p_{t_3}\Psi = B_3 \Psi ,
\eeq
where $B_3=({\cal L}^3)_{\geq 0}$ is the differential 
operator of the form
\beq\label{B1a}
B_3=\p_x^3 +3u_1 \p_x +3(u_1'+u_2)
\eeq
(the prime means the $x$-derivative). Since the coefficients
are double-periodic functions,
it is natural to find solutions for $\Psi$ 
among double-Bloch functions.

The main theorem proved in Section 4 is

\begin{theorem} \label{theorem:main1}
Let a solution to the matrix CKP equation
$u_1$ be of the form (\ref{el7a}). 
Then $x_i$ and $a_i^{\alpha}$ satisfy the following 
equations\footnote{Here and below summation over repeated Greek 
indices is implied.}:
\beq\label{e221}
\begin{array}{l}
\displaystyle{
\dot x_i=3\sum_{j\neq i}(a_i^{\gamma}a_j^{\gamma})^2 \wp (x_i-x_j),}
\\ \\
\displaystyle{
\dot a_i^{\alpha}=3\sum_{j\neq i}\sum_{k\neq j} (a_j^{\nu}a_k^{\nu})
(a_i^{\gamma}a_k^{\gamma})a_j^{\alpha}\! -\! (a_i^{\gamma}a_j^{\gamma})a_k^{\alpha}) \wp (x_j-x_k)\zeta (x_i-x_j)}
\\ \\
\displaystyle{\phantom{aaaaaaaaaaaaaaaaaa}
+\, \frac{3}{2}\sum_{j\neq i} a_j^{\alpha}(a_i^{\gamma}a_j^{\gamma})
\wp '(x_i-x_j)},
\end{array}
\eeq
where dot means the $t_3$-derivative and
$\zeta (x)$ is the Weierstrass $\zeta$-function.
\end{theorem}

The organization of the paper is as follows.
In Section 2 we start with a more general case of the 
multi-component CKP hierarchy, in which the matrix
CKP equation can be naturally embedded. This material is not new, see
\cite{KL23,Z24}. Section 3 is devoted to the restriction to the
matrix hierarchy, which is a special subhierarchy of the 
multi-component one. The necessary details on the operator 
$B_3$ are given. Section 4 contains derivation of the main
result, equations of motion (\ref{e221}). It
occupies a central place in the article. Section 5 is a conclusion.
The appendix contains the definition and the main properties of
the Weierstrass functions and the Lam\'e-Hermite function.

\section{The multi-component CKP hierarchy: the Lax-Sato
formalism}

In this paper we work within the Lax-Sato formalism.
For the equivalent bilinear formulation see \cite{Z24}.

The set of independent variables (``times'') of the hierarchy is
\beq\label{multi1}
{\bf t}=\{{\bf t}_1, {\bf t}_2, \ldots , {\bf t}_n\}, \quad
{\bf t}_{\alpha}=\{t_{\alpha , 1}, t_{\alpha , 3}, t_{\alpha , 5},
\ldots \}, \quad \alpha =1, \ldots , n.
\eeq
It is divided into $n$ infinite sets, and the variables in each 
set are indexed by positive odd numbers.
It is convenient to introduce the variable $x$ such that
\beq\label{multi2}
\p_x =\sum_{\alpha =1}^n \p_{t_{\alpha , 1}}.
\eeq
Let $f({\bf t})$ be any function of ${\bf t}$. It may be regarded
also as a function of $x$ as follows:
\beq\label{f}
f(x, {\bf t})=f(t_{1,1}+x, t_{2,1}+x, \ldots , t_{n,1}+x;
t_{1,3}, t_{2,3},\ldots , t_{n,3}; \ldots ).
\eeq

In the framework of the Lax-Sato formalism,
the hierarchy is an infinite set of evolution equations 
in the times ${\bf t}$ 
for matrix functions of the variables $x, {\bf t}$.
The main object is the Lax operator which is a 
pseudo-differential operator of the form
\beq\label{multi3}
{\cal L}=\p_x +u_1\p_x^{-1}+u_2 \p_x^{-2}+\ldots \, ,
\eeq
where the coefficients $u_i=u_i(x)$ are $n\! \times \! n$ matrices
with the constraint
\beq\label{multi4}
{\cal L}^{\dag}=-{\cal L}.
\eeq
Here $\dag$ means the formal adjoint
defined by the rule 
$\Bigl (f(x)\circ \p_x^{n}\Bigr )^{\dag}=(-\p_x)^n \circ f^{\rm T}(x)$
and $f^{\rm T}$ is the transposed matrix $f$.
The coefficient functions $u_k$ depend on 
$x$ and also on all the times:
$$
u_k(x, {\bf t})=u_k(x+t_{1,1}, x+t_{2,1}, \ldots , x+t_{n,1};
t_{1,3}, \ldots , t_{n,3}; \ldots ).
$$ 
Besides, there are
other matrix pseudo-differential operators 
${\cal R}_1, \ldots , {\cal R}_n$
of the form
\beq\label{multi5}
{\cal R}_{\alpha}=E_{\alpha}+ u_{\alpha , 1}
\p_x^{-1}+u_{\alpha , 2}\p_x^{-2}+\ldots ,
\eeq
where $E_{\alpha}$ is the $n \! \times \! n$ 
matrix with the $(\alpha , \alpha )$ element
equal to 1 and all other components equal to $0$, 
and $u_{\alpha , i}$ are 
$n \! \times \! n$ matrices. 
The operators ${\cal L}$, ${\cal R}_1, \ldots , {\cal R}_n$
satisfy the conditions
\beq\label{multi6}
{\cal L}{\cal R}_{\alpha}={\cal R}_{\alpha}{\cal L}, \quad
{\cal R}_{\alpha}{\cal R}_{\beta}=\delta_{\alpha \beta}{\cal R}_{\alpha}, \quad
\sum_{\alpha =1}^n {\cal R}_{\alpha}=I,
\eeq
where $I$ is the unity $n\times n$ matrix.
In addition, the operators
${\cal R}_{\alpha}$ are required to satisfy the constraints
\beq\label{multi7}
{\cal R}^{\dag}_{\alpha}={\cal R}_{\alpha}.
\eeq

The Lax equations which define evolution in the times read
\beq\label{multi8}
\p_{t_{\alpha , k}}{\cal L}=[B_{\alpha , k}, \, {\cal L}], \quad
\p_{t_{\alpha , k}}{\cal R}_{\beta}=[B_{\alpha , k}, \, {\cal R}_{\beta}],
\quad B_{\alpha , k} = \Bigl ({\cal L}^k{\cal R}_{\alpha}\Bigr )_{\geq 0},
\quad k=1,3,5, \ldots ,
\eeq
where $(\ldots )_{\geq 0}$ means 
the differential part of a pseudo-differential operator.

\begin{proposition}
The constraints (\ref{multi4}), (\ref{multi7}) are preserved
by the evolution defined by (\ref{multi8}).
\end{proposition}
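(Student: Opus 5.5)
The plan is to show that the time derivatives of ${\cal L}+{\cal L}^{\dag}$ and ${\cal R}_\alpha-{\cal R}_\alpha^{\dag}$ vanish whenever these quantities vanish. The key ingredient is the anti-self-adjointness of the generators, $B_{\alpha,k}^{\dag}=-B_{\alpha,k}$. Since $\dag$ is an anti-automorphism, $(AB)^{\dag}=B^{\dag}A^{\dag}$, and it commutes with the projection $(\ldots)_{\geq 0}$, taking adjoints maps differential operators to differential operators of the same order and purely integral operators to purely integral ones.

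\textbf{Step 1: $B_{\alpha,k}$ is anti-self-adjoint.} Assume (\ref{multi4}) and (\ref{multi7}). Then
$$({\cal L}^k{\cal R}_\alpha)^{\dag}={\cal R}_\alpha^{\dag}({\cal L}^{\dag})^k=(-1)^k{\cal R}_\alpha{\cal L}^k=-{\cal L}^k{\cal R}_\alpha,$$
using that $k$ is odd and that ${\cal L}$ commutes with ${\cal R}_\alpha$ by (\ref{multi6}). Applying $(\ldots)_{\geq 0}$, which commutes with $\dag$, gives $B_{\alpha,k}^{\dag}=-B_{\alpha,k}$.

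\textbf{Step 2: the constraints are preserved at the first instant.} Take the adjoint of the Lax equation. Since $\p_t$ commutes with $\dag$,
$$\p_t{\cal L}^{\dag}=[B,{\cal L}]^{\dag}={\cal L}^{\dag}B^{\dag}-B^{\dag}{\cal L}^{\dag}=[B,{\cal L}^{\dag}].$$
Hence ${\cal L}+{\cal L}^{\dag}$ obeys $\p_t X=[B,X]$. Similarly, for $Y_\beta={\cal R}_\beta-{\cal R}_\beta^{\dag}$,
$$\p_t{\cal R}^{\dag}_\beta={\cal R}_\beta^{\dag}B^{\dag}-B^{\dag}{\cal R}_\beta^{\dag}=[B,{\cal R}_\beta^{\dag}],$$
so $\p_t Y_\beta=[B,Y_\beta]$. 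Both equations are linear and homogeneous in $X$ and $Y_\beta$, so their derivatives vanish whenever $X$ and $Y_\beta$ vanish.

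\textbf{Step 3: preservation for all times (the main obstacle).} Step 1 used the constraints themselves, so Step 2 only shows that the constraint surface is invariant at points where the constraints hold. The standard way to finish is to work in the formal setting of the hierarchy: either treat the flows as formal power series in the times, or argue order by order in the coefficients.

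For the power-series version, write $X(t)=\sum_m X_m t^m$ with $X_0=0$. The equation $\p_t X=[B,X]$, with $B$ built from the solution, determines each $X_{m+1}$ from $X_0,\dots,X_m$ linearly. Thus all $X_m=0$ and the trivial solution is the only one. This works because the flow is analytic/formal and $B$ itself is simply a given function of $t$ along the solution.

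A cleaner alternative is to avoid Step 1's dependence on the constraints. Decompose ${\cal L}^k{\cal R}_\alpha$ and observe directly that $X$ and $Y_\beta$ satisfy a linear homogeneous system whose coefficients depend on the solution. Uniqueness for such linear systems, taken coefficient-wise in $\p_x^{-j}$ as a triangular system of ODE/PDE in $t$, then gives $X\equiv0$ and $Y_\beta\equiv0$.

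I would present the invariance computation as the core of the proof and note uniqueness as the concluding standard remark.
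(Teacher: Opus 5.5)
Your Steps 1 and 2 are exactly the paper's proof. The paper writes $\p_{t_{\alpha,k}}({\cal L}+{\cal L}^{\dag})=[B_{\alpha,k},{\cal L}]-[B^{\dag}_{\alpha,k},{\cal L}^{\dag}]=0$ and the analogous identity for ${\cal R}_\beta$, invoking $B^{\dag}_{\alpha,k}=-B_{\alpha,k}$, and stops there. In other words, it shows only that the flows are tangent to the constraint surface.

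The flaw is in your Step 3, which you present as needed. The power-series argument is circular. The equation $\p_t X=[B,X]$ came from $B^{\dag}=-B$, and in Step 1 that required the constraints at the same time $t$. Using it at every order of the Taylor expansion presupposes that the constraints hold for $t\neq 0$, which is the claim itself. What you need is the evolution of $X$ and $Y_\beta$ \emph{off} the constraint surface. With $D=B+B^{\dag}$ and nothing beyond (\ref{multi6}) and (\ref{multi8}), one gets
\[
\p_t X=[B,X]-[D,{\cal L}^{\dag}],\qquad
\p_t Y_\beta=[B,Y_\beta]+[D,{\cal R}^{\dag}_\beta],\qquad
D=\bigl({\cal L}^k{\cal R}_\alpha+{\cal R}_\alpha^{\dag}({\cal L}^{\dag})^k\bigr)_{\geq 0}.
\]
Substitute ${\cal L}^{\dag}=X-{\cal L}$ and ${\cal R}_\alpha^{\dag}={\cal R}_\alpha-Y_\alpha$, and use that $k$ is odd and $[{\cal L},{\cal R}_\alpha]=0$. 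The term free of $X,Y$ is then $({\cal L}^k{\cal R}_\alpha-{\cal R}_\alpha{\cal L}^k)_{\geq 0}=0$, so $D$ is a sum of words each containing at least one factor $X$ or $Y_\alpha$. Only now does the induction go through: if $X,Y_\beta=O(t^m)$, the right-hand sides are $O(t^m)$, hence $X,Y_\beta=O(t^{m+1})$. Your ``cleaner alternative'' points in this direction, but this computation is the actual content and must be written out. The system is polynomial rather than linear in $(X,Y)$, which is harmless for the formal induction.

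The uniqueness mechanism you propose for that alternative also fails as stated. Write $X=\sum_{j\geq 1}\chi_j\p_x^{-j}$. Because $B_{\alpha,k}$ has leading term $E_\alpha\p_x^k$, the coefficient of $\p_x^{-m}$ in $[B,X]$ contains $[E_\alpha,\chi_{m+k}]+kE_\alpha\chi'_{m+k-1}$. So each coefficient couples to higher ones, and there is no triangular structure in $j$. Moreover, viewed as genuine evolution equations in $(x,t)$ of order $k$ in $x$, such systems have no automatic uniqueness for the Cauchy problem (compare the heat equation). Run the argument in the formal power-series setting in the times, as above.
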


\noindent
{\it Proof.}
Indeed, for any odd $k$ we have
\beq\label{multi9}
\p_{t_{\alpha , k}}({\cal L}+{\cal L}^{\dag})=[B_{\alpha , k}, 
\, {\cal L}]-
[B^{\dag}_{\alpha , k}, \, {\cal L}^{\dag}]=0
\eeq
and
\beq\label{multi10}
\p_{t_{\alpha , k}}({\cal R}_{\beta}-{\cal R}_{\beta}^{\dag})=
[B_{\alpha , k}, \, {\cal R}_{\beta}]+[B^{\dag}_{\alpha , k}, \, {\cal R}^{\dag}_{\beta}]=0
\eeq
since $B_{\alpha, k}^{\dag}=-B_{\alpha, k}$ for odd $k$.
\square

An important object of the theory is the ``dressing'' operator, which
is the matrix pseudo-differential operator of the form
\beq\label{multi11}
{\cal W}=I+\xi_1 \p_x^{-1}+\xi_2 \p_x^{-2}+\ldots ,
\eeq
where $\xi_i$ are $n\! \times \! n$ matrix-valued functions of $x$.
Using the dressing operator, we can express
${\cal L}$ and ${\cal R}_{\alpha}$ by ``dressing'' the bare operators
$\p_x$ and $E_{\alpha}$: 
\beq\label{multi12}
{\cal L}={\cal W}\p_x {\cal W}^{-1}, \qquad
{\cal R}_{\alpha}={\cal W}E_{\alpha} {\cal W}^{-1},
\eeq
then the constraints (\ref{multi6}) hold
identically by construction (since the bare operators
$\p_x$, $E_{\alpha}$ trivially satisfy them).  
However, there is an ambiguity in the definition of the dressing 
operator: it can be multiplied
from the right by any pseudo-differential operator with constant 
coefficients
commuting with $E_{\alpha}$ for any $\alpha$.
The constraints (\ref{multi4}), (\ref{multi7}) imply that
${\cal W}^{\dag}{\cal W}$ commutes with $\p_x$ and $E_{\alpha}$, i.e.,
it is a pseudo-differential operator with constant coefficients 
commuting with $E_{\alpha}$.
The ambiguity in the definition of the dressing operator can be fixed by 
imposing the condition
\beq\label{multi13}
{\cal W}^{\dag}={\cal W}^{-1}.
\eeq

With the help of the dressing operator, one can introduce
another useful object, the 
matrix Baker-Akhiezer (BA) function:
\beq\label{multi14}
\Psi (x,{\bf t}, z)={\cal W} \exp \Bigl 
( xzI +\sum_{\alpha =1}^n E_{\alpha}
\xi ({\bf t}_{\alpha}, z)\Bigr ).
\eeq
Here
\beq\label{multi15}
\xi ({\bf t}_{\alpha}, z)=\sum_{k\geq 1, \,\, {\rm odd}}t_{\alpha , k}z^k.
\eeq
The BA function has the expansion
\beq\label{multi16}
\Psi_{\alpha \beta} (x,{\bf t}, z)=\Biggl (\delta_{\alpha \beta}+
\sum_{k\geq 1} \xi_{k, \alpha \beta}(x, {\bf t})z^{-k}\Biggr )
e^{xz+ \xi ({\bf t}_{\beta}, z)}.
\eeq
as $z\to \infty$ and satisfies the linear equations
\beq\label{multi17}
{\cal L}\Psi =z\Psi , \qquad
\p_{t_{\alpha , k}}\Psi =B_{\alpha , k}\Psi \quad (\mbox{$k$ odd}).
\eeq

\begin{remark}\label{remark:embedding}
As it is seen from the definition, the multi-component CKP hierarchy
is a subhierarchy of the multi-component KP one. The precise 
characterization of those KP-solutions that give rise to CKP-solutions
(after putting the ``even'' times $t_{\alpha , 2k}$ equal to zero)
is given in \cite{KZ21,Z24}.
For our purposes below, it will be useful to think of a
CKP-solution as being embedded into the KP hierarchy in this way.
So, dependent variables of the CKP (for example, the $\xi_i$'s),
which were originally functions of the ``odd'' times only, can be
regarded as functions of the ``even'' times as well. In particular,
we will need the derivative $\p_{t_2}\xi_1$ taken at $t_2=0$.
\end{remark}

\section{The matrix CKP hierarchy}

The matrix CKP hierarchy is a subhierarchy of the 
multi-component CKP one. It is 
obtained from it the after restricting
the independent variables in the following manner:
\beq\label{multi18}
t_{\alpha , m}=t_m \quad \mbox{for each odd $m$ and
all $\alpha =1, \ldots , n$}.
\eeq
The corresponding vector fields are related as $\displaystyle{
\p_{t_m}=\sum_{\alpha =1}^n \p_{t_{\alpha , m}}}$. 
Accordingly, we have the expansion
\beq\label{multi16a}
\Psi =\Bigl (I+
\sum_{k\geq 1} \xi_{k}z^{-k}\Bigr )
\exp \Bigl (xz+\!\!\! \sum_{k\geq 1, \,\, {\rm odd}}t_k z^{-k}\Bigr )
\eeq
of the BA function and the matrix linear equations
\beq\label{multi17a}
\p_{t_{k}}\Psi =B_{k}\Psi , \quad B_k =\Bigl 
({\cal L}^k\Bigr )_{\geq 0} \quad (\mbox{$k$ odd})
\eeq
for it. Obviously, $B_1=\p_x$, and so we can identify the variable
$x$ with $t_1$. 

The following statement is well known.

\begin{proposition}
The compatibility conditions for 
equations (\ref{multi17a}) have the form of the
Zakharov-Shabat equations
\beq\label{multi17b}
\p_{t_l}B_k-\p_{t_k}B_l+[B_k, B_l]=0, \qquad \mbox{$k,l$ odd}.
\eeq
\end{proposition}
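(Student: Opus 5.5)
The plan is to derive the zero-curvature equations directly from the Lax equations in the matrix hierarchy. After the restriction (\ref{multi18}), the dressing relation (\ref{multi12}) gives ${\cal L}={\cal W}\p_x{\cal W}^{-1}$. Summing the flows (\ref{multi8}) over $\alpha$ gives
\[
\p_{t_k}{\cal L}=\Bigl[\sum_\alpha B_{\alpha ,k},\,{\cal L}\Bigr].
\]
The constraint $\sum_\alpha{\cal R}_\alpha=I$ then shows that $\sum_\alpha B_{\alpha ,k}=({\cal L}^k)_{\geq 0}=B_k$. Hence $\p_{t_k}{\cal L}=[B_k,{\cal L}]$ for all odd $k$. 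Equivalently, $\p_{t_k}{\cal W}=-({\cal L}^k)_{<0}{\cal W}$, which gives $\p_{t_k}\Psi=B_k\Psi$ as in (\ref{multi17a}).

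The first step is to compute $\p_{t_l}B_k$. Since $\p_{t_l}({\cal L}^k)=[B_l,{\cal L}^k]$, taking the differential part yields $\p_{t_l}B_k=([B_l,{\cal L}^k])_{\geq 0}$. The same computation gives $\p_{t_k}B_l=([B_k,{\cal L}^l])_{\geq 0}$.

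The second step is the standard splitting ${\cal L}^k=B_k+({\cal L}^k)_{<0}$, and similarly for $l$. Since ${\cal L}^k$ and ${\cal L}^l$ commute, we have
\[
[B_l,{\cal L}^k]-[B_k,{\cal L}^l]=[B_l,({\cal L}^k)_{<0}]-[B_k,({\cal L}^l)_{<0}]=[({\cal L}^l)_{<0},({\cal L}^k)_{<0}]+[B_l,B_k].
\]
The first equality holds because $[B_l,B_k]-[B_k,B_l]$ combines with $[{\cal L}^l,{\cal L}^k]=0$. The commutator of two operators of negative order has negative order, so its $(\ \cdot\ )_{\geq 0}$ part vanishes. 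The differential part of $[B_l,B_k]$ is itself.

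Assembling the pieces gives $\p_{t_l}B_k-\p_{t_k}B_l=[B_l,B_k]$, which is (\ref{multi17b}). Alternatively, one can cross-differentiate $\p_{t_k}\Psi=B_k\Psi$ and argue that an operator annihilating $\Psi$ for all $z$ must vanish. That step, however, needs the completeness of $\Psi$, which is why I prefer the operator argument.

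The only point requiring care is the bookkeeping with signs and orders in the identity above; everything else is formal.
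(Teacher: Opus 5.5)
Your strategy is the standard one and is sound; the paper gives no argument and simply calls the statement well known. The reduction $\sum_\alpha B_{\alpha,k}=({\cal L}^k)_{\geq 0}=B_k$ and the formula $\p_{t_l}B_k=([B_l,{\cal L}^k])_{\geq 0}$ are fine.

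The problem is the displayed identity in your second step, which is exactly the point you flagged as delicate. Both of its equalities are false. Expanding directly,
\[
[B_l,{\cal L}^k]-[B_k,{\cal L}^l]=2[B_l,B_k]+[B_l,({\cal L}^k)_{<0}]-[B_k,({\cal L}^l)_{<0}],
\]
because $[B_l,B_k]-[B_k,B_l]=2[B_l,B_k]$. This term does not vanish in general, and $[{\cal L}^l,{\cal L}^k]=0$ cannot absorb it. Your second equality has the wrong overall sign: expanding $[{\cal L}^l,{\cal L}^k]=0$ gives $[B_l,({\cal L}^k)_{<0}]-[B_k,({\cal L}^l)_{<0}]=-[B_l,B_k]-[({\cal L}^l)_{<0},({\cal L}^k)_{<0}]$. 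So projecting your middle expression would give (\ref{multi17b}) with the wrong sign of the commutator; the two slips compensate only by accident.

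The correct computation is short. Since ${\cal L}^k$ and ${\cal L}^l$ commute, $[B_k,{\cal L}^l]=-[({\cal L}^k)_{<0},{\cal L}^l]=[B_l,({\cal L}^k)_{<0}]-[({\cal L}^k)_{<0},({\cal L}^l)_{<0}]$. Hence
\[
[B_l,{\cal L}^k]-[B_k,{\cal L}^l]=[B_l,B_k]+[({\cal L}^k)_{<0},({\cal L}^l)_{<0}],
\]
and taking $(\,\cdot\,)_{\geq 0}$ gives $\p_{t_l}B_k-\p_{t_k}B_l=[B_l,B_k]$, which is (\ref{multi17b}). With this replacement your argument is complete.

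On the alternative you set aside: the proposition's wording (compatibility of (\ref{multi17a})) points to cross-differentiation. That gives $(\p_{t_l}B_k-\p_{t_k}B_l+[B_k,B_l])\Psi=0$ immediately, with no splitting and no sign bookkeeping. The remaining step is elementary and needs no completeness. If a differential operator $D$ satisfies $D\Psi=0$ identically in $z$, write the pseudo-differential operator $D{\cal W}=\sum_j p_j\p_x^j$. Then $D\Psi$ equals $\sum_j p_j z^j$ times the exponential factor, so all $p_j=0$, and $D=0$ because ${\cal W}$ is invertible. Once corrected, your operator route has the merit of using only the Lax equations and never the wave function.
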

\square

\noindent
The matrix CKP equation (or rather a system of 
equations) is obtained from (\ref{multi17b}) 
at $k=3$, $l=5$. We do not need its explicit form because the
Krichever method implies separation of the
variables $t_3$ and $t_5$ from the very beginning. So, all what
we need is the explicit form of the operator $B_3$.
It is given by the 
following proposition. 

\begin{proposition}
The operator $B_3$ has the form
\beq\label{B4a}
\begin{array}{c}
B_3=\p_x^3 -3\xi_1'\p_x -\frac{3}{2}\, \xi_1'' -\frac{3}{2}\, 
\p_{t_2}\xi_1,
\end{array}
\eeq
where the derivative with respect to $t_2$ is taken at $t_2=0$, as it is
explained in Remark \ref{remark:embedding}.
\end{proposition}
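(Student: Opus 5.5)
The plan is to express $B_3=({\cal L}^3)_{\geq 0}$ through the coefficients $u_1,u_2$ of ${\cal L}$, and then rewrite these through the dressing operator ${\cal W}$. For $u_2$ this will require the $t_2$-flow of the ambient matrix KP hierarchy of Remark \ref{remark:embedding}.

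\emph{Step 1: $B_3$ in terms of $u_1,u_2$.} Expanding ${\cal L}^3$ with ${\cal L}=\p_x+u_1\p_x^{-1}+u_2\p_x^{-2}+\ldots$ and keeping only non-negative powers of $\p_x$, the only contributions come from $\p_x^3$, from the three placements of $u_1\p_x^{-1}$ between two factors $\p_x$, and from the three placements of $u_2\p_x^{-2}$. Using the Leibniz rule $\p_x\circ f=f\p_x+f'$, this gives
$$
B_3=\p_x^3+3u_1\p_x+3(u_1'+u_2),
$$
which is (\ref{B1a}). Since only single matrix coefficients occur linearly here, the non-commutativity of matrices plays no role.

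\emph{Step 2: $u_1$ from the dressing.} From ${\cal L}{\cal W}={\cal W}\p_x$ with ${\cal W}=I+\xi_1\p_x^{-1}+\ldots$, comparing the coefficients of $\p_x^0$ gives
$$
u_1=-\xi_1'.
$$
Hence $3u_1\p_x=-3\xi_1'\p_x$, and it remains to show that
$$
u_1'+u_2=-\tfrac12\xi_1''-\tfrac12\p_{t_2}\xi_1 .
$$

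\emph{Step 3: $u_2$ via the $t_2$-flow.} The ambient KP hierarchy has the flow $\p_{t_2}=\sum_\alpha\p_{t_{\alpha,2}}$ with $B_2=({\cal L}^2)_{\geq 0}$. Its Sato equation reads
$$
\p_{t_2}{\cal W}=-({\cal L}^2)_{<0}\,{\cal W}.
$$
The leading negative coefficient of ${\cal L}^2$ comes from $\p_x u_1\p_x^{-1}+u_1+\p_x u_2\p_x^{-2}+u_2\p_x^{-1}+\ldots$. Its $\p_x^{-1}$-coefficient is therefore $u_1'+2u_2$, so comparing $\p_x^{-1}$-coefficients in the Sato equation gives
$$
\p_{t_2}\xi_1=-(u_1'+2u_2).
$$
Solving for $u_2$ yields
$$
u_1'+u_2=\tfrac12 u_1'-\tfrac12\p_{t_2}\xi_1=-\tfrac12\xi_1''-\tfrac12\p_{t_2}\xi_1 .
$$
Substituting this into Step 1 gives (\ref{B4a}).

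\emph{Main obstacle.} The computations themselves are routine; the delicate point is the legitimacy of Step 3. The ${\cal L}$, ${\cal W}$ of the matrix CKP hierarchy must be the restrictions to $t_{\alpha,2k}=0$ of the corresponding objects of the matrix (diagonally restricted multicomponent) KP hierarchy. The dressing operator must also satisfy the Sato equation in $t_2$ with exactly $B_2=({\cal L}^2)_{\geq 0}$, and not with some gauge-modified version. This is precisely what Remark \ref{remark:embedding} and the references cited there provide. We must in addition check that the normalization (\ref{multi13}) is compatible with this embedding at $t_2=0$, since otherwise $\p_{t_2}\xi_1$ would be defined only up to a constant gauge term.

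As a consistency check, the resulting $B_3$ should satisfy $B_3^\dag=-B_3$. Using $\xi_1^{\rm T}=-\xi_1+\ldots$, which follows from (\ref{multi13}), this skew-adjointness holds provided the zeroth-order term equals $-\tfrac32\xi_1''$ plus a term whose transpose behaves appropriately; this supports the sign and the factor $\tfrac32$ in front of $\p_{t_2}\xi_1$.
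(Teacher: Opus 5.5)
Your proof is correct, but it takes a shorter route than the paper. The paper first computes $u_2=\xi_1'\xi_1-\xi_2'$ from the dressing relation. It then uses the orthogonality constraints (\ref{B2}) to rewrite $B_3$ with the antisymmetric matrix $w=[\xi_1',\xi_1]+{\xi_2'}^{\rm T}-\xi_2'$. Only after that does it bring in the $t_2$ linear problem (\ref{B6}), which gives (\ref{B7}). Symmetrizing (\ref{B7}), again with (\ref{B2}), shows that $\p_{t_2}\xi_1$ is antisymmetric at $t_2=0$, and antisymmetrizing it gives $w=-\p_{t_2}\xi_1$.

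You instead read off $\p_{t_2}\xi_1=-(u_1'+2u_2)$ directly from the $\p_x^{-1}$-coefficient of the Sato equation. This is the same input as (\ref{B6}), just written for ${\cal W}$. As a result, $\xi_2$ and the CKP constraints never enter, and your argument shows the formula is really a matrix-KP identity evaluated at $t_2=0$.

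What the paper's detour buys is the antisymmetry of $\p_{t_2}\xi_1$ at $t_2=0$. It is used in Section 4, where the vanishing of the symmetric part of $w$ gives $\p_{t_2}x_i=0$. You can recover this cheaply from your last paragraph, but fix a sign there: (\ref{multi13}) gives $\xi_1^{\rm T}=\xi_1$, not $-\xi_1$. With the correct sign, compare your formula with $B_3^{\dag}=-B_3$, which holds because ${\cal L}^{\dag}=-{\cal L}$. This comparison forces exactly $(\p_{t_2}\xi_1)^{\rm T}=-\p_{t_2}\xi_1$, so it is a derivation rather than just a consistency check.

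Your concern about whether the normalization (\ref{multi13}) is compatible with the KP embedding is legitimate. The paper makes the same implicit assumption when it applies (\ref{B6}) to the CKP-normalized $\Psi$, so on this point your proof rests on the same input as the paper.
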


\noindent
{\it Proof.}
In terms of the coefficient functions of the
Lax operator we have:
\beq\label{B1}
B_3=\p_x^3 +3u_1 \p_x +3(u_1'+u_2),
\eeq
where prime means the $x$-derivative and the constraint
(\ref{multi4}) implies that
$$u_1^{\rm T}=u_1, \quad u_2+u_2^{\rm T}=-u_1'.$$
Using (\ref{multi12}), one can also express 
the operator $B_3$ in terms of the
coefficients $\xi_i$ of the dressing operator. 
From the condition (\ref{multi13})
it follows that
\beq\label{B2}
\xi_1^{\rm T}=\xi_1, \qquad \xi_2+\xi_2^{\rm T}+\xi_1'-\xi_1^2=0
\eeq
and from (\ref{multi12}) we have
\beq\label{B3}
\begin{array}{l}
u_1=-\xi_1',
\\ \\
u_2=\frac{1}{2}\, \xi_1'' +\frac{1}{2}
\Bigl ([\xi_1', \xi_1]+{\xi_2'}^{\rm T }-\xi_2'\Bigr ).
\end{array}
\eeq
Therefore, the operator $B_3$ reads
\beq\label{B4}
\begin{array}{c}
B_3=\p_x^3 -3\xi_1'\p_x -\frac{3}{2}\, \xi_1'' +\frac{3}{2}\, w,
\end{array}
\eeq
where the antisymmetric matrix $w$ is
\beq\label{B5}
w=[\xi_1', \xi_1]+{\xi_{2}'}^{\rm T }-\xi_2'.
\eeq
One can directly check that $B_3^{\dag}=-B_3$.
In order to represent $w$ in the desired form, 
we use the embedding of the CKP
hierarchy into the KP, as explained in Remark \ref{remark:embedding},
and consider the auxiliary linear problem
for the KP $t_2$-flow at $t_2=0$:
\beq\label{B6}
\p_{t_2}\Psi = \p_x^2\Psi +2u_1 \Psi .
\eeq
With the expansion (\ref{multi16}) and the first relation in
(\ref{B3}) this equation implies that
\beq\label{B7}
\p_{t_2}\xi_1 =2\xi_2' -2\xi_1' \xi_1 +\xi_1''.
\eeq
Summing with the transposed equation and using the second relation in (\ref{B2}), we
conclude that $\p_{t_2}\xi_1$ at $t_2=0$ is an antisymmetic matrix:
$(\p_{t_2}\xi_1)^{\rm T}=-\p_{t_2}\xi_1$. Therefore, 
subtracting (\ref{B7}) from the
transposed equation, we get
\beq\label{B8}
w=[\xi_1', \xi_1]+\xi_2'{}^{\rm T}-\xi_2' =-\p_{t_2}\xi_1,
\eeq
which yields the required expression (\ref{B4a}).
\square

\section{Elliptic solutions of the matrix CKP equation}

By elliptic solutions of the CKP equation we mean solutions such that
the coefficient functions $u_i$ of the Lax operator are elliptic (double-periodic
in the complex plane) functions of the variable $x$, i.e. there exist two
complex numbers $\omega_1$, $\omega_2$ with ${\rm Im}(\omega_2/ \omega_1)>0$ such that
$u_i(x+2\omega_a)=u_i(x)$, $a=1,2$.
From (\ref{B3}) we see that a solution
is elliptic if $\xi_1'$ and $[\xi_1', \xi_1]+{\xi_{2}'}^{\rm T }-\xi_2'$
are elliptic functions of $x$
with periods $2\omega_a$.

As soon as coefficients of the equation
\beq\label{el1}
\p_{t_3}\Psi =\p_x^3\Psi +3u_1\p_x\Psi +3(u_1'+u_2)\Psi
\eeq
are double-periodic functions of 
$x$, we can look for double-Bloch solutions, i.e.
solutions such that $\Psi (x+2\omega_a)=b_a\Psi (x)$ with some Bloch multipliers $b_a$.

As a building block for double-Bloch functions,
we will use the Lam\'e-Hermite function $\Phi (x, \lambda )$
defined by equation (\ref{A4a}) in the appendix.
We will often suppress the second argument of 
this function writing simply
$\Phi (x)$ instead of $\Phi (x, \lambda )$.
We will also need the $x$-derivatives
$\Phi '(x, \lambda )=\p_x \Phi (x, \lambda )$, 
$\Phi ''(x, \lambda )=\p^2_x \Phi (x, \lambda )$
and so on.

Any non-trivial double-Bloch function (which is not just 
the exponential function)
must have poles in the fundamental domain. Therefore, we have the following
representation of the matrix wave function:
\beq\label{el5}
\Psi_{\alpha \beta}=\exp \Bigl (xz +\!
\sum_{k\geq 1,\, \, {\rm odd}}t_kz^k\Bigr )\sum_{i=1}^n \rho_{i, \alpha  \beta}
\Phi (x-x_i, \lambda ).
\eeq
As it follows from (\ref{quasi}), 
it is indeed a double-Bloch function with Bloch
multipliers
\beq\label{Bloch}
b_{a}=e^{2\omega_{a} (z-\zeta (\lambda )) + 2\zeta (\omega_{a} )\lambda }.
\eeq
The parameters $z, \lambda$ are spectral parameters.
They are connected by the equation of the spectral curve given below.

In \cite{KBBT95,PZ18} it was shown that the residue
at any pole $x_i$ in $x$ of the matrix wave function of the matrix KP hierarchy
is a rank one matrix. Since the CKP 
wave function is the wave function for a solution
of the KP hierarchy with ``even'' times
$t_{2k}$ put equal to 0, we can write
\beq\label{el5a}
\rho_{i, \alpha \beta}=a_i^{\alpha}c_i^{\beta},
\eeq
where $a_i^\alpha$, $c_i^{\alpha}$ are components of some vectors
${\bf a}_i=(a_i^1, \ldots , a_i^n)^T$, 
${\bf c}_i=(c_i^1, \ldots , c_i^n)^T$.
The vectors ${\bf a}_i$ depend on 
the times $t_3, \, t_5, \ldots$ while the vectors
${\bf c}_i$ depend on the same set of times and on $z$.
As it is shown in \cite{KBBT95},
the function (\ref{el5}) is essentially the Baker-Akhiezer function on the spectral curve. However,
it differs from the wave function 
(\ref{multi16a}) by a normalization factor.

Let us recall the story about elliptic solutions to the 
matrix KP equation \cite{KBBT95}.

\begin{theorem} (\cite{KBBT95}) \label{theorem:KP}
Let $\Psi$ 
be a double-Bloch solution to equation (\ref{B6})
with simple poles at some points $x_i$, $i=1, \ldots , N$. Then the
matrix elements of the coefficient $\xi_1$ have the form
\beq\label{kp1}
\xi_{1, \alpha \beta}=S_{\alpha \beta}-
\sum_{i=1}^N a_i^{\alpha}b_i^{\beta}\zeta (x-x_i),
\eeq
where $S_{\alpha \beta}$ is a constant matrix and
$a_i^\alpha$, $b_i^{\beta}$ are components of some vectors
${\bf a}_i$, ${\bf b}_i$ such that
\beq\label{el9a}
a_i^{\gamma}b_i^{\gamma}=1 \quad 
\mbox{for all $\, i=1, \ldots , N$}.
\eeq
They satisfy the following equations:
\beq\label{kp2}
\p_{t_2}a_i^{\alpha}=-2\sum_{k\neq i}a_k^{\alpha}
(a_i^{\gamma}b_k^{\gamma})\wp (x_i-x_k),
\qquad
\p_{t_2}b_i^{\beta}=2\sum_{k\neq i}b_k^{\beta}(a_k^{\gamma}b_i^{\gamma})
\wp (x_i-x_k).
\eeq
Dynamics of the poles is described by the equations
\beq\label{kp2a}
\p_{t_2}^2 x_i=4\sum_{j\neq i}(b_i^{\mu} a_k^{\mu} ) (b_k^{\nu}
 a_i^{\nu}) \wp '(x_i-x_j).
\eeq
Together with (\ref{kp2}), they are equations of motion of the
spin generalization of the Ca\-lo\-ge\-ro-Moser system.
\end{theorem}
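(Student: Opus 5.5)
The plan is the standard Krichever pole-expansion argument applied to the linear problem (\ref{B6}), together with its adjoint. I would use the ansatz (\ref{el5}) with $\rho_i=a_i c_i^{\rm T}$, write $\Psi=(I+\xi_1z^{-1}+\ldots )e^{xz+\ldots}$, and take $u_1=-\xi_1'$.

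\textbf{Step 1: the form (\ref{kp1}) of $\xi_1$.} Since $u_1$ is elliptic and $\Psi$ has only simple poles at the $x_i$, every coefficient $\xi_k$ has at most simple poles at $x=x_i$. Their residues all have image in the line spanned by ${\bf a}_i$, because the residue of $\Psi$ at $x_i$ is $a_ic_i^{\rm T}$ for every $z$. Hence ${\rm res}_{x_i}\xi_1=-a_ib_i^{\rm T}$ for some vector ${\bf b}_i$. The function $\xi_1'$ is elliptic with double poles and no simple-pole terms, so $\xi_1$ is a constant matrix $S$ minus $\sum_i a_ib_i^{\rm T}\zeta(x-x_i)$. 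This is (\ref{kp1}), and it gives $u_1=-\sum_i a_ib_i^{\rm T}\wp(x-x_i)+{\rm const}$.

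\textbf{Step 2: Laurent expansion of (\ref{B6}) at $x=x_i$.} I substitute $\Phi(x-x_i)=(x-x_i)^{-1}+O(x-x_i)$ and expand in powers of $x-x_i$.

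At order $(x-x_i)^{-3}$ one gets $2\rho_i-2a_ib_i^{\rm T}\rho_i=0$, i.e. $(b_i^{\gamma}a_i^{\gamma})\,a_ic_i^{\rm T}=a_ic_i^{\rm T}$. This is exactly the normalization (\ref{el9a}).

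At order $(x-x_i)^{-2}$ one gets a relation between $\p_{t_2}x_i$, $b_i^{\rm T}$ applied to the regular part of $\Psi$ at $x_i$, and $c_i$.

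At order $(x-x_i)^{-1}$ one gets $\p_{t_2}(a_ic_i^{\rm T})$ expressed through $\sum_{k\ne i}a_ib_i^{\rm T}a_kc_k^{\rm T}\Phi(x_i-x_k)$ plus derivative terms. This is a linear system for the vector $C=(c_i)$ of the form $(\p_{t_2}-M)C=0$. Comparing with the $(x-x_i)^{-2}$ relation gives a second system $LC=\ldots$, where $L_{ik}=(b_i^{\gamma}a_k^{\gamma})\Phi(x_i-x_k)$ for $k\ne i$ and $L_{ii}=\p_{t_2}x_i$, up to $z$-dependent diagonal terms.

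\textbf{Step 3: the adjoint side.} I would repeat the same expansion for the adjoint function $\Psi^{\dag}$, which solves the adjoint $t_2$ problem and whose residues involve the vectors ${\bf b}_i$. This yields the analogous linear system for the ${\bf b}$-side.

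\textbf{Step 4: equations (\ref{kp2}) and (\ref{kp2a}).} The $(x-x_i)^{-1}$ relation only determines $\p_{t_2}a_i\,c_i^{\rm T}+a_i\,\p_{t_2}c_i^{\rm T}$. So $a_i$ and $c_i$ are each fixed only up to the rescaling $a_i\to\lambda_ia_i$, $c_i\to\lambda_i^{-1}c_i$ (with $b_i\to\lambda_i^{-1}b_i$ so as to keep (\ref{el9a})). I would fix this freedom by requiring the $a_i$-evolution to have no component proportional to $a_i$. Splitting the relation into its ``$a$-part'' and ``$c$-part'' then gives the first equation in (\ref{kp2}), and the adjoint computation gives the second. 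I would check directly that these two flows preserve $a_i^{\gamma}b_i^{\gamma}=1$.

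Finally, compatibility of the two linear systems for $C$ for all $\lambda$ is the Lax equation $\p_{t_2}L=[M,L]$. I would compare its diagonal part using the addition formula $\Phi(x)\Phi(-x)=\wp(\lambda)-\wp(x)$ and $\p_x$ of it. Together with (\ref{kp2}), this should give the second-order equation (\ref{kp2a}), with the coupling $(b_i^{\mu}a_k^{\mu})(b_k^{\nu}a_i^{\nu})$ and a sum over the other poles.

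\textbf{Main obstacle.} I expect the hardest step to be Step 4. There one must handle the gauge ambiguity correctly, so that the split of $\p_{t_2}(a_ic_i^{\rm T})$ into separate equations for $a_i$ and $c_i$ is consistent with the adjoint computation. One must also verify that all $\lambda$-dependent terms cancel in the diagonal of $[M,L]$ via the Lam\'e--Hermite identities of the appendix.
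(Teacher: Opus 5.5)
Your route is the Krichever pole-expansion argument of \cite{KBBT95}. This is what the paper relies on here: it states this theorem without proof, and its conclusion describes exactly this scheme. The computations do close.

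The $(x-x_i)^{-2}$ balance is $(\frac12\p_{t_2}x_i+z)c_i+\sum_{j\neq i}(b_i^\gamma a_j^\gamma)\Phi(x_i-x_j)c_j=0$, and the $(x-x_i)^{-1}$ balance is
\[
\p_{t_2}(a_ic_i^{\rm T})=\wp(\lambda)\,a_ic_i^{\rm T}-2a_i\sum_{j\neq i}(b_i^\gamma a_j^\gamma)\Phi'(x_i-x_j)\,c_j^{\rm T}-2\sum_{k\neq i}a_k(b_k^\gamma a_i^\gamma)\wp(x_i-x_k)\,c_i^{\rm T}.
\]
So $L_{ij}=\frac12\p_{t_2}x_i\,\delta_{ij}+(1-\delta_{ij})(b_i^\gamma a_j^\gamma)\Phi(x_i-x_j)$ with eigenvalue $-z$, and $M_{ij}=\wp(\lambda)\delta_{ij}-2(1-\delta_{ij})(b_i^\gamma a_j^\gamma)\Phi'(x_i-x_j)$. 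The factor $\frac12$ in $L_{ii}$ is what produces the coefficient $4$ in (\ref{kp2a}); the index $k$ there should read $j$.

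The gauge problem you anticipate is harmless. The display fixes $\p_{t_2}a_i=-2\sum_{k\neq i}a_k(b_k^\gamma a_i^\gamma)\wp(x_i-x_k)+\mu_ia_i$ with a well-defined scalar $\mu_i$. This is a better formulation than ``no component along $a_i$'', which is ambiguous when the $a_k$ are linearly dependent. The adjoint side gives the $b$-equation plus a term $\nu_ib_i$. Since $a_i^\gamma b_i^\gamma=1$ holds at every $t_2$ (third-order poles), $\mu_i+\nu_i=0$. A single rescaling $a_i\to\lambda_ia_i$, $b_i\to\lambda_i^{-1}b_i$, $c_i\to\lambda_i^{-1}c_i$ with $\p_{t_2}\log\lambda_i=-\mu_i$ then yields both equations (\ref{kp2}). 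You can even skip the adjoint function: the residue of (\ref{B7}) at $x_i$ gives $\p_{t_2}(a_ib_i^{\rm T})$ directly.

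\textbf{Step 1 does not work as written.} Ellipticity of $u_1=-\xi_1'$ only gives $\xi_1'=\sum_ia_ib_i^{\rm T}\wp(x-x_i)+K$ with a constant matrix $K$, so $\xi_1$ may contain a term $Kx$. You keep ``$+{\rm const}$'' in $u_1$ and then silently drop it. A nonzero $K$ adds $-2Ka_ic_i^{\rm T}$ to the display above and spoils (\ref{kp2}). So $K=0$ has to be assumed (as in (\ref{el7a})) or proved.

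\textbf{Step 4 does not work as written either.} The relations $(L+z)c=0$ and $\p_{t_2}c=Mc$ only give $(\p_{t_2}L+[L,M])c=0$ on eigenvectors that come from Bloch solutions. ``For all $\lambda$'' does not yield the matrix Lax equation without a completeness argument at fixed $\lambda$. You do not need that argument. Once (\ref{kp2}) holds, the off-diagonal entries of $\p_{t_2}L+[L,M]$ vanish identically:
\begin{itemize}
\item the $\p_{t_2}x$ terms cancel against $[{\rm diag}\,L,M]$;
\item the remaining terms cancel by $\Phi(x)\Phi'(y)-\Phi'(x)\Phi(y)=\Phi(x+y)(\wp(x)-\wp(y))$.
\end{itemize}
The relation then reduces to $\bigl(\frac12\p_{t_2}^2x_i-2\sum_{j\neq i}(b_i^\mu a_j^\mu)(b_j^\nu a_i^\nu)\wp'(x_i-x_j)\bigr)c_i=0$. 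Since $c_i\neq 0$ (as $\rho_i\neq 0$), this gives (\ref{kp2a}).
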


The main result of this paper is the following CKP-analogue
of Theorem \ref{theorem:KP}.

\begin{theorem} \label{theorem:main}
Let $\Psi$ 
be a double-Bloch solution to equation (\ref{el1})
with simple poles at some points $x_i$, $i=1, \ldots , N$. Then the
matrix elements of the coefficient $\xi_1$ have the form
\beq\label{kp1a}
\xi_{1, \alpha \beta}=S_{\alpha \beta}-
\sum_{i=1}^N a_i^{\alpha}a_i^{\beta}\zeta (x-x_i),
\eeq
where $S_{\alpha \beta}$ is a constant matrix and
$a_i^\alpha$ are components of some vectors ${\bf a}_i$ such that
\beq\label{el9}
a_i^{\gamma}a_i^{\gamma}=1.
\eeq
They satisfy the following equations:
\beq\label{e24}
\begin{array}{l}
\displaystyle{
\dot a_i^{\alpha}=\frac{3}{2}\sum_{j\neq i} a_j^{\alpha}(a_i^{\gamma}a_j^{\gamma})
\wp '(x_i-x_j)}
\\ \\
\phantom{aaaaaaaaaaa}\displaystyle{
+\, 3\sum_{j\neq i}\sum_{k\neq j} (a_j^{\nu}a_k^{\nu})
((a_i^{\gamma}a_k^{\gamma})a_j^{\alpha}\! -\! (a_i^{\gamma}a_j^{\gamma})a_k^{\alpha})
\wp (x_j-x_k)\zeta (x_i-x_j)},
\end{array}
\eeq
where dot means the $t_3$-derivative.
Dynamics of the poles is described by the equations
\beq\label{e22}
\dot x_i=3\sum_{j\neq i}(a_i^{\gamma}a_j^{\gamma})^2 \wp (x_i-x_j).
\eeq
\end{theorem}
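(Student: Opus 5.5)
We follow the Krichever scheme: substitute the pole ansatz (\ref{el5}), (\ref{el5a}) for $\Psi$ into the $t_3$-linear problem (\ref{el1}), with $B_3$ written in the dressing form (\ref{B4a}), and require the result to be free of singularities at each $x_i$.

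\emph{Step 1: the form of $\xi_1$.} Theorem \ref{theorem:KP} already gives (\ref{kp1}) with $a_i^\gamma b_i^\gamma=1$. The CKP constraint $\xi_1^{\rm T}=\xi_1$ from (\ref{B2}) forces $\sum_i a_i^\alpha b_i^\beta\zeta(x-x_i)$ to be symmetric. Since the $\zeta(x-x_i)$ have distinct poles, each residue matrix must be symmetric: $a_i^\alpha b_i^\beta=a_i^\beta b_i^\alpha$. Hence ${\bf b}_i$ is proportional to ${\bf a}_i$. Rescaling ${\bf a}_i$ (allowed because $\rho_i$ only enters through the product ${\bf a}_i{\bf c}_i^{\rm T}$), we may take ${\bf b}_i={\bf a}_i$ with $a_i^\gamma a_i^\gamma=1$. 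This yields (\ref{kp1a}), (\ref{el9}), and $u_1=-\xi_1'$ becomes (\ref{el7a}).

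\emph{Step 2: the zero-order term $\p_{t_2}\xi_1$.} Differentiating (\ref{kp1}) in $t_2$ and using (\ref{kp2}) together with the KP first-order relation $\p_{t_2}x_i$ (the residue condition for (\ref{B6})), we evaluate at ${\bf b}_i={\bf a}_i$. By Remark \ref{remark:embedding} and (\ref{B8}) the result is antisymmetric. The $\zeta$-part is $\sum_i(\p_{t_2}a_i^\alpha a_i^\beta+a_i^\alpha\p_{t_2}a_i^\beta)\zeta(x-x_i)$ with $\p_{t_2}a_i$ given by (\ref{kp2}); the $\wp$-part comes from $\p_{t_2}x_i$. Antisymmetry forces $\p_{t_2}x_i$ to be proportional to $a_i^\alpha a_i^\beta$, hence zero. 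So, at $t_2=0$,
\[
\p_{t_2}x_i=0,\qquad \p_{t_2}\xi_1=2\sum_i\sum_{k\neq i}\bigl(a_k^\alpha a_i^\beta-a_i^\alpha a_k^\beta\bigr)(a_i^\gamma a_k^\gamma)\wp(x_i-x_k)\zeta(x-x_i)+\mbox{const}.
\]
(Signs are to be checked against (\ref{kp2}).) This explicit elliptic zero-order coefficient is what produces the double sums with $\wp(x_j-x_k)\zeta(x_i-x_j)$ in (\ref{e24}).

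\emph{Step 3: pole expansion.} Put $\Psi=e^{xz+t_3z^3+\dots}\sum_i{\bf a}_i{\bf c}_i^{\rm T}\Phi(x-x_i)$ into $(\p_{t_3}-B_3)\Psi=0$ and expand near $x=x_i$ using $\Phi(x)=x^{-1}+O(x)$ and $\xi_1=-{\bf a}_i{\bf a}_i^{\rm T}(x-x_i)^{-1}+\xi_{1,i}^{(0)}+\dots$. The poles of order 4 cancel identically because ${\bf a}_i^{\rm T}{\bf a}_i=1$, which is the same mechanism as in the scalar CKP case \cite{KZ21}. The third- and second-order poles must be handled with the projector ${\bf a}_i{\bf a}_i^{\rm T}$: we contract with ${\bf a}_i^{\rm T}$ and also take the component orthogonal to ${\bf a}_i$. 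These conditions give linear equations for ${\bf c}_i$ with a Lax-matrix structure, and $\dot x_i=3\sum_{j\ne i}(a_i^\gamma a_j^\gamma)^2\wp(x_i-x_j)$ should come out of the scalar projection. The simple-pole coefficient, after eliminating ${\bf c}_i$ through the lower conditions, gives the evolution of ${\bf a}_i$. Here the term $\frac32\p_{t_2}\xi_1$ from Step 2, evaluated at $x=x_i$, contributes the double sum, and $-\frac32\xi_1''$ and $-3\xi_1'\p_x$ contribute the $\wp'$ term. The leftover freedom ${\bf a}_i\to$ rotation along ${\bf a}_i$ is fixed by preserving (\ref{el9}).

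\emph{Main obstacle.} The hardest step is Step 3. It requires eliminating the $z$-dependent vectors ${\bf c}_i$ consistently for all $z$ and separating the ${\bf a}_i$-parallel and orthogonal parts of each pole order. I would first work with the vector equation obtained by multiplying by ${\bf a}_i^{\rm T}$ on the left, to get $\dot x_i$. I would then use the cubic-pole equation to express ${\bf c}_i$-dependent combinations, so that in the simple-pole equation the coefficients of ${\bf c}_j^{\rm T}$ reduce to the claimed $\dot a_i^\alpha$. As a check, the scalar case $n=1$ should reproduce the known first-order equations of \cite{KZ21}.
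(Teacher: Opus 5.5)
Your outline follows the paper's proof. You get ${\bf b}_i={\bf a}_i$ from $\xi_1^{\rm T}=\xi_1$, and $\p_{t_2}x_i=0$ and the explicit $w$ of (\ref{e20}) from antisymmetry. Cancelling the poles of orders $4,3,2,1$ in (\ref{el8}) then gives (\ref{el9}), the eigenvalue equation (\ref{e21}), (\ref{e22}), and (\ref{e24}) together with (\ref{e25}). Deriving (\ref{el9}) from (\ref{el9a}) by rescaling, rather than from the fourth-order poles, is fine.

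Two minor corrections. In Step 2 the $\zeta$-part involves $\p_{t_2}(a_i^\alpha b_i^\beta)$. It is antisymmetric precisely because (\ref{kp2}) gives $\p_{t_2}{\bf b}_i=-\p_{t_2}{\bf a}_i$ at ${\bf b}_i={\bf a}_i$; with $\p_{t_2}{\bf a}_i$ in both slots, as you wrote it, it would be symmetric. At second order, the component orthogonal to ${\bf a}_i$ is not an extra condition on the ${\bf c}_j$. After (\ref{e21}) removes the $z^2$ terms, the residue is a vector times ${\bf c}_i^{\rm T}$. The $w$-term replaces $-3\sum_{k\neq i}{\bf a}_k(a_i^\gamma a_k^\gamma)\wp(x_i-x_k)$ by $-3{\bf a}_i\sum_{k\neq i}(a_i^\gamma a_k^\gamma)^2\wp(x_i-x_k)$, which is how (\ref{e22}) arises. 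So $w$ is already essential at this order.

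The genuine issue is at first order. Once (\ref{e21}) has cancelled the $z$-dependent terms proportional to ${\bf c}_i^{\rm T}$, the residue has the form $(\dot{\bf a}_i-{\bf V}_i){\bf c}_i^{\rm T}+{\bf a}_i(\dot{\bf c}_i-{\bf Y}_i)^{\rm T}$, with ${\bf V}_i$ and ${\bf Y}_i$ the right-hand sides of (\ref{e24}) and (\ref{e25}). Its vanishing fixes $\dot{\bf a}_i$ only modulo ${\bf a}_i$; the paper simply takes $\dot{\bf a}_i={\bf V}_i$, $\dot{\bf c}_i={\bf Y}_i$.

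You propose to remove this freedom by preserving (\ref{el9}). That requirement is in fact forced, since $-{\bf a}_i{\bf a}_i^{\rm T}$ is the residue of $\xi_1$ and (\ref{el9}) holds at every $t_3$. But it does not lead to (\ref{e24}). Contracting (\ref{e24}) with $a_i^\alpha$ kills the double sum and leaves $a_i^\alpha\dot a_i^\alpha=\frac{3}{2}\sum_{j\neq i}(a_i^\gamma a_j^\gamma)^2\wp'(x_i-x_j)$, which is generically nonzero. For $n=1$, (\ref{e24}) reads $\dot a_i=\frac{3}{2}a_i\sum_{j\neq i}\wp'(x_i-x_j)$, although $a_i=\pm 1$.

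With your normalization, the $\wp'$ term of (\ref{e24}) becomes $\frac{3}{2}\sum_{j\neq i}(a_i^\gamma a_j^\gamma)(a_j^\alpha-(a_i^\gamma a_j^\gamma)a_i^\alpha)\wp'(x_i-x_j)$. Correspondingly, $\dot{\bf c}_i$ acquires the compensating term $\frac{3}{2}\sum_{j\neq i}(a_i^\gamma a_j^\gamma)^2\wp'(x_i-x_j)\,{\bf c}_i$. So your closing claim that the simple-pole equation reduces to ``the claimed $\dot a_i^\alpha$'' contradicts your own gauge fixing. Formula (\ref{e24}) comes out only in the paper's convention, and it agrees with the norm-preserving flow only modulo ${\bf a}_i$; you must say which of the two statements you prove.

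Similarly, the ``+const'' you keep in $\p_{t_2}\xi_1$ has to vanish, as (\ref{e20}) tacitly assumes. An antisymmetric constant $C$ would add a multiple of $C{\bf a}_i$ to $\dot{\bf a}_i$. That term is orthogonal to ${\bf a}_i$, so it cannot be absorbed by this freedom.
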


\noindent
{\it Proof.}
From \cite{KBBT95} and the condition that for the matrix CKP
hierarchy $\xi_1$ is a symmetric matrix
it follows that
\beq\label{el6}
\xi_{1, \alpha \beta}=S_{\alpha \beta}-
\sum_{i=1}^N a_i^{\alpha}a_i^{\beta}\zeta (x-x_i),
\eeq
where $S_{\alpha \beta}$ is a symmetric constant matrix, i.e.
the reduction to CKP means that 
$b_i^{\alpha}=a_i^{\alpha}$.
Therefore,
\beq\label{el7}
u_{1, \alpha \beta}=-\sum_{i}a_i^{\alpha}a_i^{\beta}\wp (x-x_i).
\eeq
From (\ref{kp2}) we see that
\beq\label{el11}
\p_{t_2}(a_i^{\alpha}b_i^{\beta})\Bigr |_{b_{i}^{\alpha}=a_{i}^{\alpha}}=
2\sum_{k\neq i} (a_i^{\alpha}a_k^{\beta}-a_i^{\beta}a_k^{\alpha})
(a_{i}^{\gamma}a_{k}^{\gamma})\wp (x_i-x_k),
\eeq
so $\p_{t_2}(a_i^{\alpha}b_i^{\beta})\Bigr |_{b_{i}^{\alpha}=a_{i}^{\alpha}}$ is
an antisymmetric matrix.

Now, $w=-\p_{t_2}\xi_1$
must be an antisymmetric matrix function of $x$. Therefore,
the symmetric part of this matrix vanishes, which 
implies that $\p_{t_2}x_i =0$
(at $t_2=0$). This means that the function $w$ 
has first order poles at $x=x_i$.
Using (\ref{el11}), we find its explicit form:
\beq\label{e20}
w_{\alpha \beta}=2\sum_{k\neq i} (a_i^{\alpha}a_k^{\beta}-a_i^{\beta}a_k^{\alpha})
(a_i^{\gamma}a_k^{\gamma})\wp (x_i-x_k)\zeta (x-x_i).
\eeq
We see that it is indeed an elliptic antisymmetric matrix function of $x$.

After the substitution of $\Psi$ (\ref{el5}) with 
$\rho_{i, \alpha \beta}$ as in (\ref{el5a})
and $u_1$ (\ref{el7}) into the linear problem
\beq\label{el8}
\begin{array}{c}
\p_{t_3}\Psi -\p_x^3\Psi -3u_1\p_x\Psi -\frac{3}{2}\,u_1' \Psi
-\frac{3}{2}\, w\Psi =0
\end{array}
\eeq
one can see that the left hand side has apparent poles 
at $x=x_i$ up to fourth order. Explicitly, we write (\ref{el8}) 
in the form 
$$
\sum_i \p_{t_3}(a_i^{\alpha}c_i^{\beta})\Phi (x-x_i)-
\sum_i a_i^{\alpha}c_i^{\beta}\dot x_i \Phi ' (x-x_i)
$$
$$
=3z^2 \sum_i a_i^{\alpha}c_i^{\beta}\Phi ' (x-x_i)
+3z \sum_i a_i^{\alpha}c_i^{\beta}\Phi '' (x-x_i)+
\sum_i a_i^{\alpha}c_i^{\beta}\Phi ''' (x-x_i)
$$
$$
-3z\sum_i a_i^{\alpha}a_i^{\gamma}\wp (x-x_i)\sum_j a_j^{\gamma}c_j^{\beta}\Phi (x-x_j)
-3z\sum_i a_i^{\alpha}a_i^{\gamma}\wp (x-x_i)\sum_j a_j^{\gamma}c_j^{\beta}\Phi ' (x-x_j)
$$
$$
-\frac{3}{2}\sum_i a_i^{\alpha}a_i^{\gamma}\wp ' (x-x_i)\sum_j a_j^{\gamma}c_j^{\beta}\Phi (x-x_j)
$$
$$
+3\sum_{k\neq i} (a_i^{\alpha}a_k^{\beta}-a_i^{\beta}a_k^{\alpha})
(a_i^{\gamma}a_k^{\gamma})\wp (x_i-x_k)\zeta (x-x_i)\sum_j a_j^{\gamma}c_j^{\beta}\Phi (x-x_j).
$$
Cancellation of the fourth order poles gives the conditions
(\ref{el9}):
$a_i^{\gamma}a_i^{\gamma}=1$,
i.e. the vectors ${\bf a}_i$ all have norms equal to 1.
Cancellation of the third order poles gives the equation
\beq\label{e21}
\sum_{j\neq i}(a_i^{\gamma}a_j^{\gamma})
c_j^{\beta}\Phi (x_i-x_j)+zc_i^{\beta}=0.
\eeq
From cancellation of the second order poles
we obtain (using (\ref{e21})) equations (\ref{e22}),
which are equations of motion for the poles.
Finally, from the equations
\beq\label{e25}
\begin{array}{l}
\displaystyle{
\dot c_i^{\beta}=-3(\alpha_1 z +\alpha_2)c_i^{\beta}-3z\sum_{j\neq i}(a_i^{\gamma}a_j^{\gamma})
c_j^{\beta}\Phi '(x_i-x_j)}
\\ \\
\displaystyle{\phantom{aaaaaa}-\frac{3}{2}\sum_{j\neq i}(a_i^{\gamma}a_j^{\gamma})
c_j^{\beta}\Phi ''(x_i-x_j)+3\sum_{j\neq i}\sum_{k\neq i}
(a_i^{\nu}a_k^{\nu})(a_j^{\gamma}a_k^{\gamma})c_j^{\beta}\wp (x_i-x_k)
\Phi (x_i-x_j)}
\end{array}
\eeq
and (\ref{e24})
it follows that simple poles cancel, too. 
Equations (\ref{e24}) are the equation of motion
for components of the vectors ${\bf a}_i$. 
\square

\begin{remark}\label{remerk:spectral}
Equation (\ref{e23}) 
is the eigenvalue equation
\beq\label{e23}
L{\bf c}=z{\bf c}
\eeq
for the Lax matrix
\beq\label{e23a}
L_{ij}(\lambda )
=-(1-\delta_{ij})(a_i^{\gamma}a_j^{\gamma})\Phi (x_i-x_j, \lambda )
\eeq
depending on the spectral parameter $\lambda$. The equation
\beq\label{spectral}
\det \Bigl (zI -L(\lambda )\Bigr )=0
\eeq
defines the spectral curve.  The left-hand side is a polynomial
of $z$ whose coefficients are elliptic functions of $\lambda$. 
It is a generating function for integrals of motion.
Note that 
$\lambda =O(z^{-1})$ as $z\to \infty$.
\end{remark}

\begin{remark}\label{remerk:M}
Equations (\ref{e25}) are evolution equations for
the vectors ${\bf c}_i$ with components $c_i^{\alpha}$ 
of the form
\beq\label{e26}
\dot {\bf c}=M{\bf c}
\eeq
with some matrix $M$ whose explicit form can be read from
(\ref{e25}). (We do not need it here.)
The compatibility condition with the eigenvalue 
equation (\ref{e23}) is the Lax equation
\beq\label{e27}
\dot L +[L,M]=0.
\eeq
However, a direct verification of this equation 
using the equations of motion
is technically difficult.
\end{remark}

\section{Conclusion}

The main result of this paper is equations of motion 
(\ref{e24}), (\ref{e22}) for 
the data of singular solutions (positions of poles 
and matrix coefficients at them) to the matrix CKP equation
as functions of the first nontrivial time flow $t_3$.
Comparing this with the similar result for the matrix 
KP equation, we can note two characteristic features:

\begin{itemize}
\item The equations of motion are of the first order in $t_3$
(rather than second),
\item The equations of motion follow directly from cancellation
of the second order poles which might appear after substitution
of the double-Bloch ansatz into the linear problem (\ref{el8}).
This is in contrast to what happens in the KP case, where 
cancellation of poles in the linear problem does not lead to
the equations of motion directly but first
leads to an
overdetermined system of linear equations for components 
of the vectors ${\bf c}_i$. Its compatibility is equivalent to
the Lax equation (\ref{e27}), from which the equations of motion
can be derived.
\end{itemize}

\noindent
An interesting problem for the future is to generalize this result
to the whole hierarchy, similar to what was done in 
\cite{PZ21spin} for the matrix KP hierarchy.

Finally, we note that the general result (\ref{e24}), (\ref{e22}) 
can be applied to the degenerate cases, too, when one or both periods
tend to infinity. In these cases one should simply substitute 
the degenerate (trigonometric or rational) expressions for the
Weierstrass functions given by (\ref{trig}), (\ref{rat}).

\section*{Appendix: The Weierstrass and Lam\'e-Her\-mi\-te \\ 
func\-ti\-ons}
\addcontentsline{toc}{section}{Appendix: The Weierstrass 
and Lam\'e-Hermite functions}
\def\theequation{A\arabic{equation}}
\def\theHequation{\theequation}
\setcounter{equation}{0}

Here we present the definition and main properties of the 
special functions used in the main text.

Let $\omega_1$, $\omega_2$ be complex numbers such that 
${\rm Im} (\omega_2/ \omega_1 )>0$.
The Weierstrass $\sigma$-function 
with quasi-periods $2\omega_a$, $a=1,2$,
is defined by the following infinite product over the lattice
$2\omega_1 m_1 +2\omega_2 m_2$, $m_1,m_2\in \ZZ$:
\beq\label{A1}
\sigma (x)=\sigma (x |\, \omega_1 , \omega_2)=
x\prod_{s\neq 0}\Bigl (1-\frac{x}{s}\Bigr )\, 
e^{\frac{x}{s}+\frac{x^2}{2s^2}},
\quad s=2\omega_1 m_1+2\omega_2 m_2 \quad m_1, m_2\in \ZZ .
\eeq 
It is an odd quasiperiodic function with two linearly independent
quasi-periods in the complex plane. 
The expansion around $x=0$ is $\sigma (x)=x+O(x^5)$.
The monodromy properties of the $\sigma$-function 
under shifts by the quasi-periods
are as follows:
\beq\label{A4}
\begin{array}{l}
\sigma (x+2\omega_a )=-e^{2\zeta (\omega_a )(x+\omega_a )}\sigma (x),
\quad a=1,2.
\end{array}
\eeq
Here $\zeta (x)$ is the
Weierstrass $\zeta$-function defined as
\beq\label{A4a}
\zeta (x)=\frac{\sigma '(x)}{\sigma (x)}.
\eeq
As $x\to 0$, $\zeta (x)=x^{-1} +O(x^3)$.
The Weierstrass $\wp$-function is defined as
$\wp (x)=-\zeta '(x)$. 
It is an even double-periodic function with periods $2\omega_1 , 2\omega_2$
and with second order poles at the points 
of the lattice $s=2\omega_1 m_1+2\omega_2 m_2$ with integer $m_1, m_2$.
As $x\to 0$, $\wp (x)=x^{-2}+O(x^2)$.

The Lam\'e-Hermite function is defined as
\beq\label{Phi}
\Phi (x, \lambda )=
\frac{\sigma (x+\lambda )}{\sigma (\lambda )\sigma (x)}\,
e^{-\zeta (\lambda )x}.
\eeq
It has simple poles at $x=0$ and all other points 
of the lattice. 
The expansion of $\Phi (x, \lambda )$ as $x\to 0$ is
$$
\Phi (x, \lambda )=\frac{1}{x}+\alpha_1 x +\alpha_2 x^2 +\ldots \, , 
$$
where $\alpha_1=-\frac{1}{2}\, \wp (\lambda )$, 
$\alpha_2=-\frac{1}{6}\, \wp '(\lambda )$. 
The quasiperiodicity properties of the Lam\'e-Hermite function are:
\beq\label{quasi}
\begin{array}{l}
\Phi (x+2\omega_a , \lambda )=e^{2(\zeta (\omega_a )\lambda - 
\zeta (\lambda )\omega_a )}
\Phi (x, \lambda ), \quad a=1,2.
\end{array}
\eeq
As a function of $\lambda$, $\Phi (x, \lambda )$ is a double-periodic
function with periods $2\omega_1$, $2\omega_2$.

In the degenerate case when one of the periods tends to
$\infty$, the elliptic functions degenerate to trigonometric
or hyperbolic ones. Let us choose the first period $2\omega_1$ 
to be $\pi i/\gamma$, where
$\gamma$ is some complex constant (real for hyperbolic functions and purely imaginary
for trigonometric functions).
The second period tends to infinity.
The Weierstrass functions in this limit become
\beq\label{trig}
\begin{array}{l}
\sigma (x)=\gamma^{-1}e^{-\frac{1}{6}\, \gamma^2x^2}\sinh (\gamma x),
\\ \\
\zeta (x)=\gamma \coth (\gamma x)-\frac{1}{3}\, \gamma^2 x,
\\ \\
\displaystyle{
\wp (x)=
\frac{\gamma^2}{\sinh^2(\gamma x)}+{\textstyle \frac{1}{3}}\gamma^2}.
\end{array}
\eeq
The trigonometric limit of the 
Lam\'e-Hermite function is
\beq\label{trig2}
\Phi (x, \lambda )=\gamma \Bigl (\coth (\gamma x)+
\coth (\gamma \lambda )\Bigr )
e^{-\gamma x \coth (\gamma \lambda )}.
\eeq

Finally, when both periods tend to $\infty$, trigonometric 
functions further degenerate to rational ones. In this case
we have
\beq\label{rat}
\sigma (x)=x, \quad \zeta (x)=1/x, \quad \wp (x)=1/x^2
\eeq
and
\beq\label{rat2}
\Phi (x, \lambda )=\Bigl (\frac{1}{x}+\frac{1}{\lambda}\Bigr )
e^{-\lambda x}.
\eeq

\section*{Acknowledgments}

\addcontentsline{toc}{section}{Acknowledgments}

This work was carried out within the state assignment of NRC 
``Kurchatov institute''. It was started in 2021 
jointly with I. Krichever, who took part in it at 
an early stage. I indebted to him for a lot of illuminating
discussions.


\begin{thebibliography}{99}

\addcontentsline{toc}{section}{References}

\bibitem{AMM77}
H. Airault, H.P. McKean, and J. Moser, {\it Rational and 
elliptic solutions of the
Korteweg-De Vries equation and a related many-body problem},
Commun. Pure Appl. Math., {\bf 30} (1977) 95--148.



\bibitem{Krichever78}
I.M. Krichever, {\it Rational solutions of the Kadomtsev-Petviashvili
equation and integrable systems of $N$ particles on a line},
Funct. Anal. Appl. {\bf 12:1} (1978) 59--61.


\bibitem{Krichever80} I.M. Krichever, {\it Elliptic solutions of the Kadomtsev-Petviashvili
equation and integrable systems of particles}, Funk. Anal. i Ego Pril. {\bf 14:4} (1980) 45--54
(in Russian); English translation:
Functional Analysis and Its Applications {\bf 14:4} (1980) 282–-290.

\bibitem{CC77} D.V. Chudnovsky and G.V. Chudnovsky, {\it Pole expansions of non-linear
partial differential equations}, Nuovo Cimento {\bf 40B} (1977) 339--350.


\bibitem{Calogero71}
F. Calogero, {\it Solution of the one-dimensional
$N$-body problems with quadratic
and/or inversely quadratic pair potentials}, J. Math. Phys.
{\bf 12} (1971) 419--436.

\bibitem{Calogero75} F. Calogero, {\it Exactly solvable one-dimensional many-body
systems}, Lett. Nuovo Cimento {\bf 13} (1975) 411--415.

\bibitem{Moser75}
J. Moser, {\it Three integrable Hamiltonian systems connected with isospectral
deformations}, Adv. Math. {\bf 16} (1975) 197--220.

\bibitem{OP81} M.A. Olshanetsky and A.M. Perelomov, {\it Classical integrable
finite-dimensional systems related to Lie algebras}, Phys. Rep. {\bf 71} (1981) 313--400.

\bibitem{Shiota94} T. Shiota, 
{\it Calogero-Moser hierarchy and KP hierarchy},
J. Math. Phys. {\bf 35} (1994) 5844-5849.

\bibitem{KZ95} I. Krichever, A. Zabrodin,
{\it Spin generalization of the Ruijsenaars-Schneider model, 
non-abelian 2D
Toda chain and representations of Sklyanin algebra},
Uspekhi Math. Nauk, {\bf 50} (1995) 3--56.

\bibitem{Haine07} L. Haine, {\it KP 
trigonometric solitons and an adelic flag
manifold}, SIGMA {\bf 3} (2007) 015.

\bibitem{Z19a} A. Zabrodin, {\it KP hierarchy and trigonometric 
Calogero-Moser
hierarchy}, Journal of Mathematical Physics {\bf 61} 
(2020) 043502, arXiv:1906.09846.


\bibitem{Z19} A. Zabrodin, {\it Elliptic solutions to integrable nonlinear
equations and many-body systems}, 
Journal of Geometry and Physics 
{\bf 146} (2019) 103506, arXiv:1905.11383.

\bibitem{PZ21} V. Prokofev and A. Zabrodin, 
{\it Elliptic solutions to the KP hierarchy
and elliptic Calogero-Moser model}, 
Journal of Physics A: Math. Theor., {\bf 54} (2021) 305202.

\bibitem{PZ21a}
V. Prokofev and A. Zabrodin, 
{\it Elliptic solutions to Toda lattice hierarchy and elliptic
Ruijsenaars-Schneider model}, Teor. Mat. Fys.,
{\bf 208} (2021), No. 2, 282–-309
(English translation: Theoretical and Mathematical Physics,
{\bf 208} (2021) 1093--1115.

\bibitem{PZ23}
V. Prokofev and A. Zabrodin, 
{\it Elliptic solutions of the Toda lattice with constraint 
of type B and deformed Ruijsenaars-Schneider system}, 
Mathematical Physics, Analysis and Geometry {\bf 26:20} (2023).








\bibitem{DJKM81a} E. Date, M. Jimbo, M. Kashiwara and 
T. Miwa, {\it Transformation groups
for soliton equations III}, J. Phys. Soc. Japan {\bf 50} (1981) 3806--3812.

\bibitem{KL93} V. Kac and J. van de Leur, {\it The 
$n$-component KP hierarchy and 
representation theory}, in: A.S. Fokas, V.E. Zakharov (Eds.), Important Developments
in Soliton Theory, Springer-Verlag, Berlin, Heidelberg, 1993.


\bibitem{TT07} K. Takasaki and T. Takebe, {\it Universal
Whitham hierarchy, dispersionless Hirota equations
and multicomponent KP hierarchy}, Physica D {\bf 235} (2007) 109--125.

\bibitem{Teo11} L.-P. Teo, {\it The multicomponent KP hierarchy: 
differential Fay identities and Lax
equations}, J. Phys. A: Math. Theor. {\bf 44} (2011) 225201.

\bibitem{KBBT95}
I. Krichever, O. Babelon, E. Billey and 
M. Talon, {\it Spin generalization of the
Calogero-Moser system and the matrix 
KP equation}, Amer. Math. Soc. Transl. Ser. 2
{\bf 170} (1995) 83--119.

\bibitem{PZ18}
V. Pashkov and A. Zabrodin, {\it Spin generalization of the
Calogero-Moser hierarchy and the matrix KP hierarchy}, J. Phys. A: Math. Theor. {\bf 51} (2018) 215201.

\bibitem{PZ21spin}
V. Prokofev, A. Zabrodin,
{\it Elliptic solutions to matrix KP hierarchy and spin generalization 
of elliptic Calogero-Moser model}, J. Math. Physics,
{\bf 62} (2021) 061502.

\bibitem{GH84} J. Gibbons and T. Hermsen,
{\it A generalization of the Calogero-Moser system},
Physica D {\bf 11} (1984) 337–-348.

\bibitem{DJKM81}
E. Date, M. Jimbo, M. Kashiwara and T. Miwa,
{\it KP hierarchy of orthogonal and symplectic type -- Transformation
groups for soliton equations VI}, J. Phys. Soc. Japan {\bf 50} (1981) 3813--3818.


\bibitem{DM-H09}
A. Dimakis and F. M\"uller-Hoissen, {\it BKP and CKP revisited: 
the odd KP system},
Inverse Problems {\bf 25} (2009) 045001, arXiv:0810.0757.

\bibitem{CW13}
L. Chang and C.-Z. Wu, {\it Tau function of the
CKP hierarchy and non-linearizable Virasoro symmetries},
Nonlinearity {\bf 26} (2013) 2577--2596.

\bibitem{CH14}
J. Cheng and J. He, {\it The ``ghost'' symmetry in the CKP hierarchy},
Journal of Geometry and Physics, {\bf 80} (2014) 49--57.

\bibitem{LOS12}
J. van de Leur, A. Orlov and T. Shiota, {\it
CKP hierarchy, bosonic tau function and bosonization formulae},
SIGMA {\bf 8} (2012) 036,
arXiv:1102.0087

\bibitem{KZ21}
I. Krichever and A. Zabrodin,
{\it Kadomtsev-Petviashvili turning points and CKP hierarchy},
Commun. Math. Phys. {\bf 386} (2021) 1643--1683,
arXiv:2012.04482  












\bibitem{KL23}
V. Kac, J. van de Leur,
{\it Multicomponent KP type hierarchies and their reductions, 
associated to conjugacy classes of Weyl groups of classical Lie algebras},
J. Math. Phys. {\bf 64} (2023) 9,
arXiv:2304.05737.


\bibitem{Z24}
A. Zabrodin, 
{\it Tau-function of the multi-component CKP hierarchy},
Math. Phys., Analysis and Geometry {\bf 27:1} (2024).



















\end{thebibliography}
\end{document}